\pgfplotsset{compat=1.14}
\newcolumntype{R}{>{\raggedleft\arraybackslash}X}
\newcolumntype{L}{>{\raggedright\arraybackslash}X}
\newcolumntype{C}{>{\centering\arraybackslash}X}
\newcolumntype{A}{>{\columncolor{gray!25}}C}
\newcolumntype{a}{>{\columncolor{gray!25}}c}
\newlength{\tablen}
\newcolumntype{.}{D{.}{.}{-1}}
\renewcommand\p@subfigure{\arabic{figure}.}
\renewcommand\p@subtable{\arabic{table}.}
\setlist[itemize]{leftmargin=2.5\parindent}
\setlist[enumerate]{leftmargin=2.5\parindent}
\theoremstyle{plain}
\newtheorem{proposition}{Proposition}[section]
\theoremstyle{definition}
\newtheorem{definition}{Definition}[section]
\newtheorem{example}{Example}[section]
\theoremstyle{remark}
\newtheorem{remark}{Remark}[section]
\def\keywords{\vspace{.5em} 
{\noindent \textit{Keywords}: }}
\def\JEL{\vspace{.5em} 
{\noindent \textbf{\emph{JEL} classification number}: }}
\def\AMS{\vspace{.5em} 
{\noindent \textbf{\emph{MSC} class}: }}
\author{\href{https://sites.google.com/view/laszlocsato}{L\'aszl\'o Csat\'o}\thanks{~E-mail: \emph{laszlo.csato@sztaki.hu}} }
\affil{Institute for Computer Science and Control (SZTAKI) \\
Laboratory on Engineering and Management Intelligence \\ Research Group of Operations Research and Decision Systems}
\affil{Corvinus University of Budapest (BCE) \\
Department of Operations Research and Actuarial Sciences}
\affil{Budapest, Hungary}
\title{When neither team wants to win: \\ a flaw of recent UEFA qualification rules}
\date{\today}
\def\Dedication{
{\noindent
$\mathfrak{Wer}$ $\mathfrak{\ddot{u}ber}$ $\mathfrak{dem}$ $\mathfrak{Unm\ddot{o}glichen}$ $\mathfrak{das}$ $\mathfrak{M\ddot{o}gliche}$ $\mathfrak{vers\ddot{a}umt}$, $\mathfrak{der}$ $\mathfrak{ist}$ $\mathfrak{ein}$ $\mathfrak{Tor}$.}\footnote{~``\emph{He who neglects the possible in quest of the impossible is a fool.}'' (Source: Carl von Clausewitz: \emph{On War}, Book 8, Chapter 9 -- Plan of War when the Destruction of the Enemy is the Object. Translated by Colonel James John Graham, London, N. Tr\"ubner, 1873. \url{http://clausewitz.com/readings/OnWar1873/TOC.htm})}
\vspace{0.25cm}

\flushright
\noindent (Carl von Clausewitz: \emph{Vom Kriege})
\vspace{1cm} 
\justify }
\begin{document}

\maketitle
\thispagestyle{empty}
\Dedication

\begin{abstract}
\noindent
Tanking, the act of deliberately dropping points or losing a game in order to gain some other advantage, is usually seen as being against the spirit of sports. It can be even more serious if playing a draw is a (weakly) dominant strategy for both teams in a match, since this may lead to collusion.
We show that such a situation occurred in a particular football match. As our generalisation reveals, the root of the problem resides in the incentive incompatibility of certain UEFA qualification rules.
The governing bodies of major sports should choose strategy-proof tournament designs because of several reasons. First, they may lead to the elimination of a third, innocent team. Second, incentive incompatible rules may discourage both teams from scoring goals, and the players could be interested in improving other match statistics than the number of goals.


\keywords{sports rules; football; incentive compatibility; fairness; match fixing}

\AMS{62F07, 91A80, 91B14}

\JEL{C44, D71, Z20}
\end{abstract}

\clearpage

\section{Introduction} \label{Sec1}

Any tournament design should provide contestants with the appropriate incentives \citep{Szymanski2003}. \citet{VaziriDabadghaoYihMorin2018} consider not rewarding teams for poor performance as an important aspect of the fairness of sports ranking methods.

However, ill-designed rules sometimes inspire \emph{tanking}, the act of deliberately dropping points or losing a game. For instance, if a team is already eliminated from the playoff of a league, it can be profitable to decrease the effort in order to secure a better position in the draft for the next season \citep{TaylorTrogdon2002, Lenten2016, Fornwagner2019, LentenSmithBoys2018}. In other cases, strategic manipulation can be applied to play against a preferred competitor in the later stage of a tournament \citep{Pauly2014, Vong2017}.
\citet{KendallLenten2017} offer the first comprehensive review of sports rules that have led to similar unexpected consequences.

The current paper deals with a particular case of tanking by assuming that a team exerts a lower effort only if it cannot be worse off -- in other words, it is extremely risk-averse in the choice of its strategy.
Examples include a famous football 
match played by \href{https://en.wikipedia.org/wiki/Barbados_4\%E2\%80\%932_Grenada_(1994_Caribbean_Cup_qualification)}{Barbados and Grenada in the 1994 Caribbean Cup qualification} \citep[Section~3.9.4]{KendallLenten2017}, qualifications for FIFA World Cups \citep{DagaevSonin2013, Csato2017d}, and UEFA club competitions \citep{DagaevSonin2018, Csato2019b}.

While it is usually unacceptable if a team can benefit from losing, some arguments decrease the importance of incentive incompatibility in practice.
First, the probability that such a situation occurs is often (very) low: otherwise, pure ``evolution'' would have selected out the unfair rule.
Second, one may think that both teams cannot be interested in tanking.

In the following, the second reasoning is shown to be invalid, at least in a particular tournament format.
As a motivation, we will present a real-world football match when playing a draw was the only opportunity to qualify for the two teams playing against each other.
After that, this bizarre situation will be generalised, which also provides solutions on how to prevent the possibility of such \emph{forced collusion} in the future. These are our main contributions.

Consequently, the paper is connected to another direction of the literature discussing (potential) examples of collusion. According to \citet{ElaadKrumerKantor2018}, in more corrupt countries, football teams in danger of relegation have a higher probability to win in the last game of the season and then reciprocate by losing in the following year.
The topic has also got media coverage recently because the planned format of the \href{https://en.wikipedia.org/wiki/2026_FIFA_World_Cup}{2026 FIFA World Cup} contains groups composed of three teams, which creates an ``ideal'' environment for match fixing \citep{Guyon2018b, GuyonMonkovic2018}. \citet{Guyon2019a} aims to quantify how often this is expected to happen and proposes alternative designs that would eliminate or strongly decrease the risk of collusion.

To illustrate the strength of the applied concept, consider the notorious football match called ``\href{https://en.wikipedia.org/wiki/Disgrace_of_Gij\%C3\%B3n}{Disgrace of Gij\'on}'' (its German name, ``\href{https://de.wikipedia.org/wiki/Nichtangriffspakt_von_Gij\%C3\%B3n}{Nichtangriffspakt von Gij\'on}'' [Non-aggression pact of Gij\'on] is perhaps even more expressive) \citep[Section~3.9.1]{KendallLenten2017}.
This was the final game of Group 2 in the \href{https://en.wikipedia.org/wiki/1982_FIFA_World_Cup}{1982 FIFA World Cup} between West Germany and Austria. Winning by one or two goals for West Germany would result in both teams qualifying at the expense of Algeria, while all other results would lead to the elimination of either West Germany or Austria.
After ten minutes West Germany scored, and the remaining 80 minutes was characterized by few serious attempts to attack.
However, the rules have not failed in the sense that the teams would not lose anything by kicking an additional goal.

The rest of the paper is organised as follows. Section~\ref{Sec2} discusses the particular match where the optimal strategy was forced collusion of the teams.
The presented scenario is generalised in Section~\ref{Sec3}, and Section~\ref{Sec4} concludes.

\section{A real-world example} \label{Sec2}

The \href{https://en.wikipedia.org/wiki/2016_UEFA_European_Under-17_Championship_qualification}{2016 UEFA European Under-17 Championship qualification} was a football competition, organised by the UEFA (Union of European Football Associations), to determine the national teams participating in the 2016 UEFA European Under-17 Championship final tournament. 
$53$ national teams entered the competition, which was played in two subsequent group stages between September 2015 and April 2016. The qualifying round was followed by the elite round, where 32 teams were drawn into eight groups of four teams each.
All matches were played as a part of a round-robin tournament in one of the countries in the group, with three points awarded for a win, one for a draw and none for a defeat \citep[Article~13]{UEFA2015b}. Tie-breaking rules in the groups are detailed in \citet[Article~14]{UEFA2015b}.

The eight group winners and the seven runners-up qualified for the final tournament.
The second-placed teams were ranked on the basis of the matches played against the group winners and third-placed teams according to the following criteria \citep[Article~15]{UEFA2015b}:
(1) higher number of points;
(2) superior goal difference;
(3) higher number of goals scored;
(4) lower disciplinary points;
(5) drawing of lots.

\begin{table}[t]
\centering
\captionsetup{justification=centering}
\caption{2016 UEFA European Under-17 Championship \\ qualifying competition -- Elite round, Group 6}
\label{Table1}

\begin{subtable}{\linewidth}
\centering
\caption{Match results}
\label{Table1a}
\rowcolors{1}{}{gray!20}
    \begin{tabularx}{0.9\linewidth}{lLLc} \toprule
    Date  & First team & Second team & Result \\ \hline \showrowcolors
    29 March 2016, 18:30 & France & Greece & 1-0 \\
    29 March 2016, 18:30 & Austria & Iceland & 0-0 \\ \hline
    31 March 2016, 18:30 & France & Iceland & 1-0 \\
    31 March 2016, 18:30 & Greece & Austria & 0-0 \\ \hline
    3 April 2016, 15:00 & Austria & France & to be played \\
    3 April 2016, 15:00 & Iceland & Greece & to be played \\ \bottomrule
    \end{tabularx}
\end{subtable}

\vspace{0.5cm}
\begin{subtable}{\linewidth}
\centering
\captionsetup{justification=centerfirst}
\caption{Standing after two matchdays \\
\footnotesize{Pos = Position; W = Won; D = Drawn; L = Loss; GF = Goals for; GA = Goals against; GD = Goal difference; Pts = Points. All teams have played 2 matches.}}
\label{Table1b}
\rowcolors{1}{}{gray!20}
    \begin{tabularx}{\linewidth}{Cl CCCC CC >{\bfseries}C} \toprule \showrowcolors
    Pos   & Team   & W     & D     & L     & GF    & GA    & GD    & Pts \\ \hline
    1     & France & 2     & 0     & 0     & 2     & 0     & 2     & 6 \\
    2     & Austria & 0     & 2     & 0     & 0     & 0     & 0     & 2 \\
    3     & Greece & 0     & 1     & 1     & 0     & 1     & -1    & 1 \\
    3     & Iceland & 0     & 1     & 1     & 0     & 1     & -1    & 1 \\ \bottomrule    
    \end{tabularx}
\end{subtable}
\end{table}

A bizarre situation occurred in Group 6 of the \href{https://en.wikipedia.org/wiki/2016_UEFA_European_Under-17_Championship_qualification#Elite_round}{elite round}, which deserves further investigation. On the morning of 3 April 2016, each team in the group had one more game to play as Table~\ref{Table1} shows. All other groups were finished by this date, the (provisional) worst runner-up being Poland from Group 7 with $1$ point, $1$ goal for and $2$ goals against after discarding its match played against the fourth-placed team of its group. 

We will focus on the match between Greece and Iceland. Both teams are eliminated if Austria defeats France in the parallel match.
If Austria does not win against France, then the winner of the match Greece vs Iceland would be the runner-up with $4$ points. However, it would have only $1$ point, $0$ goals for and $1$ goal against in the ranking of second-placed teams due to discarding the match played against the fourth-placed team of the group (the loser of Greece vs Iceland), which leads to elimination as the worst runner-up.
Consequently, if Austria plays at least a draw against France, then both Greece and Iceland are eliminated.

However, if Austria is defeated by France, Greece or Iceland may qualify by playing a draw with scoring at least one goal (in the case of 0-0, Poland is better among the second-placed teams): three teams, Austria, Greece, and Iceland would have $2$ points and the group tie-breaking rule provides that Greece and Iceland overtake Austria due to their higher number of head-to-head goals scored.
The runner-up would be determined by lower disciplinary points (red card = 3 points, yellow card = 1 point, expulsion for two yellow cards in one match = 3 points), or by a coin toss if disciplinary points are equal \citep[Article~14]{UEFA2015b}.
If Greece vs Iceland is 1-1, the runner-up has an identical goal record to Poland among the second-placed teams, so disciplinary points count, but a draw of at least 2-2 ensures the qualification of the runner-up ahead of Poland.

To summarise, the weakly dominant strategy of Greece and Iceland is to play a draw of at least 2-2, independently of the result of the other match taking place at the same time between Austria and France. Provided that it is reached -- certainly not a big challenge since both teams are interested in this outcome --, their objective function is to increase the number of disciplinary points of the \emph{other team}, that is, to force the players of the opponent to commit offences.
Alternatively, since no team had an ex-ante advantage in disciplinary points (Greece collected one yellow card against France and two against Austria, while Iceland collected one yellow card against Austria and two against France), they could have colluded to play a draw of 2-2 and wait patiently for the coin toss to grab the only chance of qualification.

This football match is mentioned in \citet[footnote~64]{KendallLenten2017} without further discussion. The authors refer to an online forum, available at \url{https://www.reddit.com/r/soccer/comments/4d5ki3/theres_a_bizarre_situation_in_u17_euro_2016/}, \linebreak where a user with the nickname \emph{Liverpool} describes the situation. However, it contains a mistake since a draw of 2-2 would not be followed by a penalty shootout.

\section{Generalisation} \label{Sec3}

In the following, we show that the possible occurrence of a situation when both teams playing a match are interested in tanking is a common feature of similar qualifiers.

Consider a \emph{group-based qualification system} $\mathcal{Q}$, where the participating teams are divided into $k \geq 2$ round-robin groups such that the number of teams in group $G_i$ is $n_i$, $i =1, 2, \dots ,k$.
Under any set of match results $\mathbf{R}$, the top $a_i \geq 0$ teams in every group $G_i$ directly qualify, while the next $b_i \geq 0$ teams -- whose set is denoted by $B_i(\mathbf{R})$ -- are compared in a so-called \emph{repechage group}. The remaining $n_i - a_i - b_i \geq 0$ teams of group $G_i$ are eliminated.

Let $B(\mathbf{R}) = \cup_{i=1}^k B_i(\mathbf{R})$ be the set of teams in the repechage group, where only the group matches played against the top $c_i$ teams are considered and $a_i + b_i \leq c_i \leq n_i$.

The teams are allocated into three distinct sets $T_1(\mathbf{R})$, $T_2(\mathbf{R})$, and $T_3(\mathbf{R})$ based on the set of match results $\mathbf{R}$, which correspond to the set of teams directly qualified (the probability of participation in the next round is $1$), advanced to the play-offs (the probability of participation in the next round is positive but less than $1$), and being eliminated (the probability of participation in the next round is $0$), respectively. Consequently, each team prefers being in $T_1(\mathbf{R})$ to being in $T_2(\mathbf{R}) \cup T_3(\mathbf{R})$, and they favour being in $T_2(\mathbf{R})$ over being in $T_3(\mathbf{R})$.
The allocation is assumed to be monotonic, that is, no team can achieve a better position in its group and in the repechage group by having less number of points or scoring fewer goals in any matches.
Finally, since the set of match results $\mathbf{R}$ uniquely determine the fate of the teams, they should contain all statistics used for breaking ties including yellow and red cards, the outcome of potential penalty shootouts, etc.

\begin{example} \label{Examp1}
In the elite round of the 2016 UEFA European Under-17 Championship qualification, there are $k = 8$ groups with $n_i = 4$ teams each such that $a_i = 1$, $b_i = 1$, and $c_i = 3$ for all $i = 1,2, \dots ,8$.
The repechage group is composed of the second-placed teams from all groups, where the first seven directly qualify and the last team is eliminated, therefore $\left| B(\mathbf{R}) \cap T_1(\mathbf{R}) \right| = 7$ and $\left| B(\mathbf{R}) \cap T_3(\mathbf{R}) \right| = 1$, independently of the match results $\mathbf{R}$.
Furthermore, $\left| T_1(\mathbf{R}) \right| = 15$ as the eight top teams also directly qualify, $\left| T_2(\mathbf{R}) \right| = 0$ because there are no play-offs, and $\left| T_3(\mathbf{R}) \right| = 17$.
\end{example}

Let $\mathcal{Q}$ be a group-based qualification system with the set of match results $\mathbf{R}$. For any team $x$, denote by $m(\mathbf{R},x)$ the integer for which $x \in T_{m(\mathbf{R},x)} (\mathbf{R})$.

Let $\mathcal{Q}$ be a group-based qualification system, $\mathcal{R}$ be a family of sets of match results, and $x,y$ be two teams in the same group $G_i$.
A family of sets of match results is considered because a given set of match results $\mathbf{R}$ completely determines the sets $T_1(\mathbf{R})$, $T_2(\mathbf{R})$, and $T_3(\mathbf{R})$, while we should show that both teams have a positive probability to qualify unless they kick additional goals.

For any set of match result $\mathbf{R} \in \mathcal{R}$, a set of match results $\mathbf{R_{xy}'}$ is said to be \emph{advantageous} for team $x$ with respect to its match(es) played against $y$ if $\mathbf{R}$ and $\mathbf{R_{xy}'}$ are identical but team $x$ has scored more goals against $y$ according to $\mathbf{R_{xy}'}$. This family of sets of match results is denoted by $\mathcal{R}_{xy}(\mathbf{R})$.


\begin{definition} \label{Def1}
\emph{Forced collusion}:
Let $\mathcal{Q}$ be a group-based qualification system, $\mathcal{R}$ be a family of sets of match results, and $x,y$ be two teams in the same group $G_i$.
Teams $x$ and $y$ are \emph{forced to collude} under $\mathcal{R}$ if:
\begin{itemize}
\item
for all set of match results $\mathbf{R} \in \mathcal{R}$, $\mathbf{R_{xy}'} \in \mathcal{R}_{xy}(\mathbf{R})$ implies $m(\mathbf{R_{xy}'},x) \geq m(\mathbf{R},x)$ and $\mathbf{R_{yx}'} \in \mathcal{R}_{yx}(\mathbf{R})$ implies $m(\mathbf{R_{yx}'},y) \geq m(\mathbf{R},y)$; and 
\item
there exists at least one set of match results $\mathbf{R} \in \mathcal{R}$ such that $\mathbf{R_{xy}'} \in \mathcal{R}_{xy}(\mathbf{R})$ implies $m(\mathbf{R_{xy}'},x) > m(\mathbf{R},x)$; and
\item
there exists at least one set of match results $\mathbf{R} \in \mathcal{R}$ such that $\mathbf{R_{yx}'} \in \mathcal{R}_{yx}(\mathbf{R})$ implies $m(\mathbf{R_{yx}'},y) > m(\mathbf{R},y)$.
\end{itemize}
\end{definition}

The first condition of Definition~\ref{Def1} means that teams $x$ and $y$ cannot benefit from scoring more goals with respect to qualification under any set of match results $\mathbf{R} \in \mathcal{R}$.
However, according to the second (third) condition, there is a possible scenario when team $x$ ($y$) would be worse off due to a better performance. Since these can be different situations, it would not be sufficient to base the definition only on one set of match results $\mathbf{R}$.
Thus forced collusion arises when the weakly dominant strategy of both teams $x$ and $y$ is to kick no more goals against each other.

\begin{example} \label{Examp2}
As we have seen in Section~\ref{Sec2}, the elite round of the 2016 UEFA European Under-17 Championship qualification allows for forced collusion under the family of sets of match results $\mathcal{R}$, which is given by the real match results except for the following:
\begin{itemize}
\item
France vs Austria is 1-0 for all $\mathbf{R} \in \mathcal{R}$;
\item
Iceland vs Greece is 2-2 for all $\mathbf{R} \in \mathcal{R}$;
\item
the number of yellow cards collected by the teams in the match Iceland vs Greece can be arbitrary in $\mathcal{R}$.
\end{itemize}
Let team $x$ be Iceland and team $y$ be Greece.
Now there is a set of match results $\mathbf{R} \in \mathcal{R}$ when $x \in T_1(\mathbf{R})$ and there is a set of match results $\mathbf{R} \in \mathcal{R}$ when $y \in T_1(\mathbf{R})$, depending on which team collects more disciplinary points.

However, if Iceland scores some additional goals against Greece in any $\mathbf{R} \in \mathcal{R}$, then $x \in T_3(\mathbf{R_{xy}'})$ for all $\mathbf{R_{xy}'} \in \mathcal{R}_{xy}(\mathbf{R})$ independently of $\mathbf{R} \in \mathcal{R}$, which is the most unfavourable outcome for Iceland.
Similarly, if Greece scores some additional goals against Iceland in any $\mathbf{R} \in \mathcal{R}$, then $y \in T_3(\mathbf{R_{yx}'})$ for all $\mathbf{R_{yx}'} \in \mathcal{R}_{yx}(\mathbf{R})$ independently of $\mathbf{R} \in \mathcal{R}$, which is the most unfavourable outcome for Greece.
\end{example}

The setting above may lead to forced collusion.

\begin{proposition} \label{Prop1}
Let $\mathcal{Q}$ be a group-based qualification system.
There exist teams $x,y$ from the same group $G_i$ that are forced to collude under the family of sets of match results $\mathcal{R}$ if the following two conditions hold simultaneously:
\begin{itemize}
\item
there is a difference in the allocation of teams in the repechage group, that is, at least two of the sets $B(\mathbf{\bar{R}}) \cap T_1(\mathbf{\bar{R}})$, $B(\mathbf{\bar{R}}) \cap T_2(\mathbf{\bar{R}})$, and $B(\mathbf{\bar{R}}) \cap T_3(\mathbf{\bar{R}})$ are non-empty under any set of match results $\mathbf{\bar{R}} \in \mathcal{R}$; and
\item
$b_i \geq 1$ and a proper subset of group matches played against lower ranked teams in group $G_i$ are considered in the repechage group, that is, $a_i + b_i < c_i < n_i$.
\end{itemize}
\end{proposition}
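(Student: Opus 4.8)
The plan is to prove the proposition by an explicit construction that lifts the mechanism of Section~\ref{Sec2} to arbitrary admissible parameters. The decisive feature is that, by scoring one goal too many, a team turns its draw against the opponent into a win and thereby pushes that opponent below the cut-off position $c_i$, which \emph{erases} from its own repechage record exactly the goals it has just scored. I would therefore engineer a base profile $\mathbf{R}$ in which the match between $x$ and $y$ is precisely this ``load-bearing'' match, counted in the repechage only as long as it ends in a draw.

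First I would populate group $G_i$ using the chain $a_i+b_i<c_i<n_i$. Place $a_i+b_i-1$ clearly dominant teams in the top positions, a block of $q:=c_i-a_i-b_i\geq 1$ ``pivot'' teams sharing a common point total $P$ with $x$ and $y$, and $n_i-c_i-1\geq 0$ trailing teams far below; these two inequalities are exactly what guarantee that both the pivot block and at least one discarded position fit inside a group of $n_i$ teams. I set the match $x$ versus $y$ to a draw carrying a suitable number of goals while the pivots draw goallessly inside the tied block, so that the goals-scored tie-break ranks $x$ and $y$ immediately above the pivots. Then $x$ and $y$ occupy positions $a_i+b_i$ and $a_i+b_i+1$ and the pivots follow at positions $a_i+b_i+2,\dots,c_i+1$, so exactly one of $x,y$ enters $B_i(\mathbf{R})$ while its mutual match with the other still counts, both sitting among the top $c_i$ since $a_i+b_i+1\leq c_i$.

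Next I would calibrate the repechage and define the family $\mathcal{R}$. By the first condition the repechage group splits into at least two non-empty classes at every profile, hence it carries a genuine threshold; choosing the results of the remaining $k-1$ groups I place the $B_i$-team from $G_i$ exactly at that threshold, so that its record including the goals scored against the opponent lands it in the better occupied class, whereas the same point total stripped of those goals falls into the worse one. Letting only the disciplinary statistics of the $x$ versus $y$ match vary then defines $\mathcal{R}$: since $x$ and $y$ are level on every other criterion, these statistics alone decide which of them takes the $B_i$ slot, so for some profiles $x$ holds it while $y$ is eliminated in the group, and for the rest the roles are exchanged. With this calibration I would verify the three bullets of Definition~\ref{Def1}. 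If $x$ scores extra goals against $y$, monotonicity keeps $x$ in $B_i$, but the draw becomes a win, so $y$ drops to $P-1$ points, slides below all $q$ pivots to position $c_i+1$, and the $x$ versus $y$ match leaves the counted set; this sends $x$ into the worse class, giving $m(\mathbf{R_{xy}'},x)\geq m(\mathbf{R},x)$ on every $\mathbf{R}\in\mathcal{R}$ and strict inequality whenever $x$ held the slot, and the symmetric claim for $y$ follows by swapping the teams.

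The step I expect to be the main obstacle concerns the team that, in a given profile, sits just outside $B_i$ and is eliminated in its group: for forced collusion this team too must not gain by scoring, yet scoring carries it into $B_i$. This is harmless precisely when a record stripped of the mutual goals is sent to $T_3$, that is, when the worse of the two occupied classes is $T_3$; if instead the first condition populates only $T_1$ and $T_2$, the construction must be modified so that \emph{both} $x$ and $y$ already lie inside $B_i$ in the base profile, removing any group-stage elimination. Pinning down which variant applies for each admissible $(a_i,b_i,c_i,n_i)$, and realising the required position shifts by an actual monotone set of group results, is the delicate part of the argument, and it is where the bound $b_i\geq 1$, the strict gap $a_i+b_i<c_i<n_i$, and the lower occupied class furnished by the first condition must all be used in concert.
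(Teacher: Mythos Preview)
Your approach is genuinely different from the paper's. The paper does not build an abstract template; it simply anchors on Example~\ref{Examp2} (the concrete Greece--Iceland situation restricted to Groups~6 and~7), checks that $a_i=1$, $b_i=1$, $c_i=3$, $n_i=4$ there, and then adjusts each parameter independently by adding cloned teams: a copy of France (drawing $0$--$0$ with France) to raise $a_i$ and $n_i$, a copy of Greece/Iceland (drawing $1$--$1$ with both) to raise $b_i$ and $n_i$, and a copy of Austria (drawing $0$--$0$ with Austria) to raise $c_i$ and/or $n_i$; removing France covers $a_i=0$. The three $T_1/T_2/T_3$ combinations in the repechage are handled in a single sentence asserting that replacing one class by another ``does not affect the incentives of the teams''. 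What the paper's route buys is brevity and the reassurance that an actual round-robin table realises the configuration; what your route buys is a transparent explanation of \emph{why} the inequality $a_i+b_i<c_i<n_i$ is exactly what is needed.

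The obstacle you flag is real, and your plan does not close it. In your base profile the team that loses the disciplinary tie-break sits at position $a_i+b_i+1$ and is in $T_3$; if it scores an extra goal it enters $B_i$ and lands in the \emph{lower} occupied repechage class. When that lower class is $T_3$ the first bullet of Definition~\ref{Def1} holds, but when the two occupied classes are $T_1$ and $T_2$ the team has moved from $m=3$ to $m=2$, so scoring strictly helps and forced collusion fails. Your proposed fix---putting both $x$ and $y$ inside $B_i$---requires $b_i\ge 2$, so the case $b_i=1$ with $B(\mathbf{\bar R})\cap T_3(\mathbf{\bar R})=\emptyset$ is left open in your sketch. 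The paper's one-line treatment of this case is at the same level of informality: in its construction the non-runner-up of Greece/Iceland is likewise eliminated in the group and, by winning, becomes a runner-up with the stripped record; if the worse repechage class were $T_2$ rather than $T_3$, that team would also improve. In short, your mechanism is correct and matches the paper's in spirit, but neither your plan nor the paper's proof provides a fully worked verification of the $\{T_1,T_2\}$-only case when $b_i=1$; if you want a complete argument you should either restrict the first hypothesis to require $B\cap T_3\neq\emptyset$, or supply a separate construction for that corner case.
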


\begin{proof}
It is sufficient to provide an example where two teams are forced to collude.
Example~\ref{Examp2} can be used as a starting point with the family of sets of match results $\mathcal{R}$. Restrict the 2016 UEFA European Under-17 Championship qualifying competition to Groups 6 and 7. Let $G_i$ be Group 6. 
Note that $a_i = 1$, $b_i = 1$, $c_i = 3$, and $n_i = 4$, as well as $\left| B(\mathbf{\bar{R}}) \cap T_1(\mathbf{\bar{R}}) \right| = 1$ and $\left| B(\mathbf{\bar{R}}) \cap T_3(\mathbf{\bar{R}}) \right| = 1$ under any set of match results $\mathbf{\bar{R}} \in \mathcal{R}$.

Greece and Iceland are forced to collude. The construction works for $a_i = 0$, $b_i = 1$, $c_i = 2$, and $n_i = 3$ by leaving out France from Group 6.
$a_i$ and $n_i$ can be increased in parallel by adding a team with a copy of France which plays 0-0 against France.
$b_i$ and $n_i$ can be increased in parallel by adding a team with a copy of Greece/Iceland which plays 1-1 against both Greece and Iceland.
$c_i$ and/or $n_i$ can be increased by adding a team with a copy of Austria which plays 0-0 against Austria.

This example remains valid if $\left| B(\mathbf{\bar{R}}) \cap T_2(\mathbf{\bar{R}}) \right| = 1$ instead of $\left| B(\mathbf{\bar{R}}) \cap T_1(\mathbf{\bar{R}}) \right| = 1$, or $\left| B(\mathbf{\bar{R}}) \cap T_2(\mathbf{\bar{R}}) \right| = 1$ instead of $\left| B(\mathbf{\bar{R}}) \cap T_3(\mathbf{\bar{R}}) \right| = 1$ because these modifications do not affect the incentives of the teams.
It is also clear that the number of groups can be arbitrarily increased.
\end{proof}

According to the first requirement of Proposition~\ref{Prop1}, achieving a better position among the teams of the repechage group can be effective with respect to qualification. A crucial part of the problematic match in Section~\ref{Sec2} was the differentiation of runners-up since $\left| B(\mathbf{\bar{R}}) \cap T_1(\mathbf{\bar{R}}) \right| = 7$ and $\left| B(\mathbf{\bar{R}}) \cap T_3(\mathbf{\bar{R}}) \right| = 1$.

The second condition provides that a team can improve its position in the repechage group through not kicking more goals, which is impossible if all matches played against the teams ranked lower than the particular team that is relegated to the repechage group from $G_i$ are either counted ($c_i = n_i$) or discarded ($a_i + b_i = c_i$). The former is trivial. If all matches played against the teams ranked lower are discarded and $b_i = 1$, then influencing the ranking below the team that is relegated to the repechage group has no effect on the ranking in the repechage group, so scoring a goal cannot be unfavourable for this particular team.

Nonetheless, it is not a necessary condition of forced collusion because it might happen even if $a_i + b_i = c_i$ provided that $b_i \geq 2$: consider the case when both the second- and the third-placed teams of the groups are relegated to the repechage group in the example of Section~\ref{Sec2}. But we do not know any use of this policy in practice.

\begin{remark} \label{Rem1}
The elite round of the UEFA European Under-17 Championship qualification used the same format between \href{https://en.wikipedia.org/wiki/2016_UEFA_European_Under-17_Championship_qualification}{2016} and \href{https://en.wikipedia.org/wiki/2016_UEFA_European_Under-17_Championship_qualification}{2019}.
However, in the elite round of the \href{https://en.wikipedia.org/wiki/2020_UEFA_European_Under-17_Championship_qualification}{2020 UEFA European Under-17 Championship qualification}, scheduled to be played in spring 2020, all results are taken into account in the comparison of second-placed teams, which implies that forced collusion is no longer possible. This difference is noteworthy in the official regulations of the tournaments as \citet{UEFA2019c} essentially follows word by word \citet{UEFA2018f} with the exception of Article~15.02.

The first round (called qualifying round) of the \href{https://en.wikipedia.org/wiki/2020_UEFA_European_Under-17_Championship_qualification}{2020 UEFA European Under-17 Championship qualification} does not satisfy the requirements of Proposition~\ref{Prop1}, too, because the third-placed teams of the groups are compared on the basis of their matches played against the first- and second-placed teams in their groups, that is, $a_i + b_i = c_i$, therefore the second condition of Proposition~\ref{Prop1} does not hold.
\end{remark}


\begin{remark} \label{Rem2}
The root of the problem described in Section~\ref{Sec2} is clearly the comparison of the second-placed teams, which have not played any matches against each other, in a badly designed repechage group.
Some reasonable policies to avoid the negative result of Proposition~\ref{Prop1} are detailed in \citet[Section~5]{Csato2020f}. In the case of the elite round of the 2016 UEFA European Under-17 Championship qualification, since each group consists of four teams, the most straightforward solution is to consider all matches in the comparison of the second-placed teams, which has been implemented in the elite round of the \href{https://en.wikipedia.org/wiki/2020_UEFA_European_Under-17_Championship_qualification}{2020 UEFA European Under-17 Championship qualification} according to Remark~\ref{Rem1}.

However, UEFA has probably chosen to discard the match played against the last team of the group because of the different strength of the weakest teams. Thus our alternative proposal is to discard the match played against the team drawn from the last pot, which also guarantees incentive compatibility. Concerning the example of Section~\ref{Sec2}, this means the elimination of the match played against Iceland if the runner-up of the group is Austria, France, or Greece, and the match against Greece (the team drawn from the penultimate pot) if the runner-up is Iceland.
Consequently, both Greece and Iceland would lose the only chance of qualification before the last matchday, so they would not be forced to collude.

Nonetheless, this implies that the last two games in the group become unimportant with serious financial consequences. Incentive compatibility is obviously not the only aspect of evaluating sports rules, however, such trade-offs are beyond the scope of the current research.
\end{remark}

\begin{remark} \label{Rem3}
If there is a group-based qualification system allowing for forced collusion of two teams, and tie-breaking rules contain other match statistics than the number of goals (such as disciplinary points), then the teams may want to influence this particular match statistics rather than the number of goals.
\end{remark}

It is worth noting that UEFA modified tie-breaking rules from the \href{https://en.wikipedia.org/wiki/2017_UEFA_European_Under-17_Championship}{2017 UEFA European Under-17 Championship} and its \href{https://en.wikipedia.org/wiki/2017_UEFA_European_Under-17_Championship_qualification}{qualification} \citep{UEFA2016e}, as well as from the \href{https://en.wikipedia.org/wiki/2017_UEFA_European_Under-19_Championship}{2017 UEFA European Under-19 Championship} and its \href{https://en.wikipedia.org/wiki/2017_UEFA_European_Under-19_Championship_qualification}{qualification}. To be more specific, they have added a criterion before drawing of lots both in the groups and in the repechage group as a higher position in the coefficient ranking list used for the qualifying round draw. While this still does not give a strict ranking of the teams, the reform substantially decreases the probability that some teams should be ranked randomly, by a coin toss.

However, the change might have worsened the situation described in Section~\ref{Sec2}. The coefficient of Greece was lower than the coefficient of Iceland, therefore the only opportunity for Greece to qualify would be to play a draw of at least 2-2 and ensure that Iceland has more disciplinary points, that is, the players of Iceland get more yellow/red cards during the match.
This strange incentive is probably against the intentions of the administrators as the primary goal of every team should be to score as many goals as possible.

\section{Conclusions} \label{Sec4}

Because Austria defeated France by 2-1, Greece and Iceland had no chance to qualify (the result was 0-1). Perhaps the players of the latter teams were well-informed on the parallel match, and they did not want to risk the collusion.

Nevertheless, the (weakly) dominant strategy of Greece and Iceland seems to be against the intention of organisers. First, it may lead to the elimination of a third, innocent team. Second, the rules may discourage both teams from scoring goals, furthermore, the players can be interested in improving other match statistics than the number of goals.

This problematic situation was barely avoided during the 2016 UEFA European Under-17 Championship qualifying competition. It should be an undeniable warning for the governing bodies of major sports to apply strategy-proof tournament designs in the future. UEFA has recently changed its policy in this direction.

\section*{Acknowledgements}
\addcontentsline{toc}{section}{Acknowledgements}
\noindent
We are grateful to \emph{Dmitry Dagaev}, \emph{Julien Guyon}, and \emph{Alex Krumer} for beneficial remarks. \\
An anonymous reviewer and \emph{Tam\'as Halm} provided valuable comments and suggestions on an earlier draft. \\
We are indebted to the \href{https://en.wikipedia.org/wiki/Wikipedia_community}{Wikipedia community} for contributing to our research by collecting and structuring basic information on the tournaments discussed in the paper. \\
The research was supported by the MTA Premium Postdoctoral Research Program grant PPD2019-9/2019.

\bibliographystyle{apalike}
\bibliography{All_references}

\end{document}